\documentclass[11pt]{article}

\usepackage{graphicx} 
\usepackage{amssymb,amsmath,amsthm,csquotes,mathtools,amssymb}
\usepackage{bbm}
\usepackage[backend=biber,style=authoryear-comp,citestyle=authoryear-comp,maxnames=1]{biblatex}
\usepackage{mathtools}
\usepackage{csquotes}
\usepackage{amsthm}
\usepackage{amsmath}
\usepackage{mathabx}
\usepackage{dsfont}
\usepackage{enumerate}
\usepackage{biblatex}
\usepackage{hyperref}
\usepackage{physics}
\usepackage{multirow,array}
\usepackage{xcolor}
\newtheorem{claim}{Claim}
\usepackage{subcaption}
\usepackage[margin=1in]{geometry}
\usepackage{tikz}
\usetikzlibrary{positioning, arrows.meta}

\newtheorem{definition}{Definition}
\newtheorem{theorem}{Theorem}

\newtheorem{corollary}[theorem]{Corollary}

\title{Quantum Primitive for Output-Hiding Function Sharing: Strategic Settings}

\author{Olivia R. Hartzell\thanks{This work is the subject of U.S. Provisional Patent Application no. 64/045,601.\\
\copyright 2026 Olivia Hartzell. This work is licensed under a Creative Commons Attribution-NonCommercial-NoDerivatives 4.0 International License (CC BY-NC-ND 4.0) available at https://creativecommons.org/licenses/by-nc-nd/4.0/. Any commercial reuse or creation of derivative works without the express written consent of the author is prohibited.}}

\date{\today}

\begin{document}
\maketitle

\begin{abstract}
Applications of the proposed primitive: \textit{Quantum Primitive for Output-Hiding Function Sharing} are discussed for environments in which the parties may have strategic considerations over the generated function value, and therefore may not be mutually trusted. We show that in environments in which the measurement may not be reliably controlled by either party, the primitive permits the parties to generate private, unbiased coins. 
\end{abstract}

\section{Introduction}
While the standard model introduced in \cite{hartzell_quantum} assumes that the parties are mutually trusted in that they are assumed not to collude with any attacker, we extend the model to a setting in which either party may have \textit{preferences} over the generated function $s(x_a, x_b)$. For instance, $s = 1$ may indicate whether Alice becomes the leader in an election. Therefore, an untrusted Alice may attempt to \textit{bias} the outcome in her favor. In what follows, we show that the maximum bias that any cheating party can guarantee on $s$ is zero, when the measurer is permitted to choose any measurement strategy that she likes.

We note that the security properties discussed in \cite{hartzell_quantum} over $s(x_a, x_b)$ hold in this setting, when parties may have preferences over the outcome, but \textit{do not} collude with any attacker or external parties.

\subsubsection{Notable Differences from Two-Party Cryptography}
As discussed in \cite{hartzell_quantum}, the proposed primitive offers a fundamentally different information-sharing structure than standard secure multi-party computation. In the proposed setting,  the joint measurement outcome permits the parties to learn each others' local information simultaneously, while the generated function value $s(x_a, x_b)$ remains information-theoretically hidden from others. By contrast, for parties to simultaneously learn one anothers' local, private information in a classical setting, they require either a trusted, shared authority -- who then learns the parties' local information -- or cryptographic infrastructure. 

Classical cryptographic protocols, including those that rely on commit-then-open procedures, are not truly simultaneous (\cite{ben2002fair}, \cite{BrassardChaumCrepeau1988}, \cite{Naor1991}, \cite{Pedersen1991}, \cite{Ajtai1996}, \cite{KawachiTanakaXagawa2008}).  Instead, cryptographic protocols tend to approximate simultaneity through mechanisms such as commitments or gradual revelation, and require timing bounds, a trusted party, or other physical assumptions to guarantee simultaneity. In practice, when one party can learn the others' information before revealing their own, they may abort the protocol -- and exploit a last-mover advantage-- even if information disclosure is gradual or enforced by time-lock commitments. As discussed in detail in \cite{hartzell_quantum}, any such classical protocol cannot satisfy the information-theoretic properties of the joint measurement or shared function value $s$. 

In general, commitment schemes satisfy two key security properties: \textit{hiding}, ensuring that the committed value remains secret until the reveal phase, and \textit{binding}, ensuring that the committer cannot change the value after committing. When the parties follow the primitive, the information that they choose to share is binding in that once the state has left their local system, they cannot change it. It is hiding in that no party learns the others' information until the \textit{joint} measurement. In contrast with traditional commitment procedures, there is no period where either party could hold evidence of another's hidden commitment, until they receive the others' physical system or the joint measurement is performed. 

This information revelation property is made possible only through joint measurement outcomes, as local measurement outcomes must abide by no-signaling constraints. The measurement outcome needs to be correct for information to be successfully revealed between the parties, however, the measurer need not be trusted -- when the parties abide by the primitive, either she learns $s$ with likelihood no better than random, or she generates an error. 

Overall, this measurement-based procedure offers a more streamlined approach to information revelation relative to \cite{kent1999unconditionally}-style quantum bit commitment protocols which rely on classical transcripts, synchronized clocks, and multiple agents per party to guarantee simultaneous, private information disclosure, while preserving the side-channel attack resistance properties of outsourcing measurements to a third party.

Importantly, however, as established by the Mayers–Lo–Chau no-go theorem (\cite{Mayers1997}, \cite{LoChau1997}), unconditionally secure quantum bit commitment that satisfies both hiding and binding properties is impossible. The proposed primitive clearly does not bypass this impossibility, which in the standard setting assumes that either party who deviates from the primitive can exert complete control over all systems outside their opponent’s local subsystem. Clearly, a cheating Alice who may perform the joint measurement outcome herself, may learn of Bob's input and announce any measurement outcome that she likes.

 I do not attempt to propose any alternative frameworks that contrast with traditional two-party cryptography. I merely show by example, that the maximum bias of $s$ that any unilaterly deviating party can ensure is zero, when the measurer is unrestricted insofar as she is free to implement any measurement strategy that she likes. In this setting, the primitive may be used to generate a private, unbiased coin. The resulting game through outsourcing the joint measurement does not map to a point game as in \cite{kitaev2003quantum_coin}, \cite{mochon2007quantum_weak}. In this case, the measurer is a removed system for the parties' information sharing, but is \textit{not} a trusted party for correctness or secrecy. Under this assumption, the primitive may be used to fairly generate shared, private functions when parties may have strategic incentives, in addition to its applications in secure communications.

When this physical assumption is practically valid, this property, along with simultaneous information-sharing, suggests additional potential applications for quantum resources, particularly for strategic or economic settings where simultaneous revelation is the gold standard and aborts in classical settings are undesirable. Applications may including fair contract signing, elections, or coordination actions in distributed systems. Other important economic applications may include auctions, financial transactions, or matching problems without a central mediator, particularly when parties wish to \textit{obscure} their joint decisions.

\section{Private Coin Flipping}\label{sec:coin_flipping}
In this section, I show that certain constructions of the primitive may be used to generate a \textit{private} unbiased coin between the parties. In particular, when the parties are adversarial in that they have individual \textit{preferences} over the value of the coin, I show that the maximum bias that any unilaterly deviating party can ensure is zero, when the measurer is unrestricted. In some cases, biasing the coin may sacrifice privacy of both the coin and parties' local information. 

Consider the example primitive described in Figure \ref{fig:coin_flipping}. The coin, the recorded value of $s$, is biased if the probability of generating any value is $> \frac{1}{2}$, conditioned on the outcome being shared between the parties (i.e. no error occurs).  Clearly when all parties follow the primitive, $s$ is unbiased, while the security properties described in \cite{hartzell_quantum} are satisfied.

Deviations are defined as follows.
\begin{definition}[Party deviation]
A deviation by a party consists of replacing the actions prescribed by the primitive with an arbitrary strategy as follows.

First, the party may prepare any initial pure state
\[
\ket{\psi} \in \mathcal H_A \otimes \mathcal H_B \otimes \mathcal H_E,
\]
where $\mathcal H_E$ is an arbitrary ancilla system that may remain under the party's control.

Second, the party may apply any local unitary
\[
U_A : \mathcal H_A \rightarrow \mathcal H_A
\]
to their subsystem.

Finally, upon receiving a measurement announcement $j$ from the measurer, the party may record any value $s \in \{0,1\}$ according to an arbitrary strategy, which may differ from the recording rule prescribed by the primitive.
\end{definition}

\begin{definition}[Measurer deviation]
A deviation by the measurer consists of replacing the measurement and announcement rule prescribed by the primitive with an arbitrary strategy acting on the systems received from the parties.

Let $\ket{\phi} \in \mathcal{H}_A \otimes \mathcal{H}_B$ denote the post-unitary state sent to the measurer. The measurer may perform any measurement described by a POVM
\[
\{\Pi_k\}_{k \in \mathcal{K}}, \qquad
\Pi_k \ge 0, \qquad
\sum_{k \in \mathcal{K}} \Pi_k = \mathbb{I}_{\mathcal{H}_A \otimes \mathcal{H}_B}.
\]
which produces an outcome $k \in \mathcal{K}$.

After obtaining the measurement outcome, the measurer may announce any value $j \in \mathcal{J}$ according to an arbitrary (possibly stochastic) map
\[
P(j \mid k)
\]
\end{definition}

It is assumed that the other constraints of the primitive hold in that neither party can gain physical access to the other party's post-unitary subsystem-- this subsystem is obtained by the measurer only. Clearly, when this assumption does not hold, or when one adversarial party may control the third-party measurement, $s$ may not be unbiased.
Additionally, given that we consider adversarial parties, we only consider either party's unilateral deviation, assuming the opposing party abides by the primitive.

\begin{claim}\label{ref:claim_coin1}
    If the measurer abides by the primitive, and one party deviates, then $s$ is unbiased. 
\end{claim}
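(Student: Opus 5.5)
The plan is to use the honest party's uniformly random local input as a one-time pad on the recorded value $s$. I fix the construction of Figure \ref{fig:coin_flipping}, which I read as follows. The shared state is in $\mathcal H_A\otimes\mathcal H_B$. The honest party, say Bob (the case where Alice is honest is symmetric), draws $x_b\in\{0,1\}$ uniformly and applies a local unitary $U_B^{x_b}$. This unitary acts as a Pauli-type permutation of the measurer's basis $\{\ket{\beta_j}\}$. The measurer honestly measures in that basis and announces $j$. Each party then records $s$ through the prescribed rule $s=g(j,x)$, which for each fixed $j$ is a bijection in the party's own input.

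If the permutation structure assumed in the first step fails, then $s$ can be biased. So this first step is checking that the Figure's primitive actually has this property, and the whole argument depends on it. I expect it to be the main obstacle.

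\textbf{Step 1 (covariance).} By the Definition of party deviation, Alice prepares an arbitrary $\ket{\psi}\in\mathcal H_A\otimes\mathcal H_B\otimes\mathcal H_E$ and applies an arbitrary $U_A$. All of this happens before, and independently of, Bob's private choice of $x_b$, since Alice has no access to Bob's post-unitary subsystem. Let $q(j)=\|(\bra{\beta_j}\otimes \mathbb I_E)(U_A\otimes\mathbb I)\ket{\psi}\|^2$. I would check that Bob's unitary permutes the measurement basis, $(\mathbb I\otimes U_B^{x_b})^\dagger\ket{\beta_j}\propto\ket{\beta_{\pi_{x_b}(j)}}$, and deduce the joint law of outcome and input:
\begin{equation*}
P(j\mid x_b)=q(\pi_{x_b}(j)).
\end{equation*}

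\textbf{Step 2 (one-time pad).} I would write $j=(y,\cdot)$, where $y$ is the bit that the honest rule combines with the input, so that $s=y\oplus x_b$ in Bob's record. The goal is to show that the pair consisting of $x_b$ and $y\oplus x_b$ is distributed as a uniform bit together with an independent variable drawn from $q$. Concretely, the permutation $\pi_{x_b}$ flips exactly the $y$-component. Hence $s=y\oplus x_b$ has the law of the $y$-marginal of $q$ pushed through a shift by an independent uniform bit. That law is uniform, whatever $q$ Alice induces.

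\textbf{Step 3 (conditioning on agreement).} Alice may record any value $r$ according to an arbitrary strategy. That strategy can depend only on $j$ and her own private randomness and register $E$, and not on $x_b$. "No error" means $r=s$. I would show that, conditioned on $j$ and Alice's register, $s$ is still uniform. Then
\begin{equation*}
P(s=0,\ r=s)=P(s=1,\ r=s),
\end{equation*}
and the conditional bias is zero.

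The delicate point in Step 3 is that $j$ itself is correlated with $x_b$. I would therefore condition on Alice's entire view, namely $j$ together with the post-measurement state of $E$. I expect that showing $s$ independent of this view is exactly the secrecy property of $s$ established in \cite{hartzell_quantum}, and I would invoke it if it applies here. Otherwise I would verify it directly from the permutation structure in Step 1.
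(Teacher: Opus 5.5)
Your Step 1 is correct and is the same covariance the paper states as its trace identities. Step 2, however, rests on a structure under which the pad cancels. Suppose $\pi_{x_b}$ flips the $y$-coordinate and Bob records $y\oplus x_b$. Take $k\sim q$ independent of $x_b$, as Step 1 gives. Then $y=k_y\oplus x_b$, so $s=k_y$. Its law is the $y$-marginal of $q$, which Alice chooses. Your own stated goal (that $x_b$ and $y\oplus x_b$ are a uniform bit plus an independent draw from $q$) says exactly this, so it contradicts your conclusion that the law is uniform.

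The Figure has a different structure. Bob has four unitaries, not one input bit. They act as shifts $b=(b_1,b_2)$ of the index of $\ket{\phi_{j_1j_2}}$: $U_b,U_b',U_b'',U_b'''$ give $(0,0),(1,1),(1,0),(0,1)$. Alice's two unitaries act as shifts $(a,0)$. The recording rules are $s=a\oplus j_2$ for Alice and $s=j_1\oplus b_1\oplus b_2$ for Bob. Writing $j=k\oplus b$, Bob's record becomes $k_1\oplus b_2$. The bit $b_1$ cancels exactly as in your model. The real pad is $b_2$: it enters Bob's record directly, while shifting only the coordinate $j_2$, which his recording rule does not use. For an honest Alice the same mechanism gives $a\oplus k_2$. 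This is what the paper's identities encode: for each fixed $k$, two of Bob's four unitaries record $0$ and two record $1$.

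Step 3 aims at a false statement and should be dropped. The coin cannot be uniform given Alice's view. If she acts honestly with $U_a$, then $k=(0,0)$ and Bob's record is $b_2=j_2$, which she reads off the announcement. A shared coin must be computable by each party; the secrecy in \cite{hartzell_quantum} is against the measurer, not the parties. Uniformity would not help anyway. With $r=f(V)$, it gives $P(s=v,\,r=s\mid V)=\tfrac12$ when $f(V)=v$ and $0$ otherwise, which is not equality in $v$. Concretely, if Alice records $r\equiv 0$, then $P(s=0,\,r=s)=\tfrac12$ and $P(s=1,\,r=s)=0$. So zero bias conditioned on agreement cannot be proved against a party who is free to misrecord.

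The paper's proof uses the other reading: a mismatch is an error and does not count as biasing. Once the honest party's record is shown uniform for every $\ket{\psi}$ and $U_A$, the probability that any value $v$ is jointly recorded is at most $P(\text{Bob records } v)=\tfrac12$. The no-signalling remark ($\rho_{AE}$ independent of $U_B$) then only confirms that Alice's strategy cannot depend on $b$.
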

See \ref{sec:private_coin_flipping_proof} for a proof.

\begin{claim}\label{ref:claim_coin2}
    If the parties follow the primitive, and the measurer deviates, then $s$ is unbiased.
\end{claim}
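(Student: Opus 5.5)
The approach is to reduce the claim to a property of the honest parties' joint state: its reduced state, conditioned on any announcement, leaves the value of $s$ that each party records uniformly distributed. The measurer controls only the operation on $\mathcal H_A\otimes\mathcal H_B$ and the announcement map $P(j\mid k)$. The parties' recording rule, which maps the announcement $j$ together with their private inputs $x_a,x_b$ (and the local keys of the construction in Figure \ref{fig:coin_flipping}) to $s$, is fixed by the primitive. The plan is therefore to write the joint distribution of $(x_a,x_b,k,j)$ explicitly and show that $\Pr[s=0\mid \text{no error}, j]=\tfrac12$ for every announced $j$ and every strategy.

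\textbf{Step 1.} Write the post-unitary state as
\[
\ket{\phi}=\sum_{x_a,x_b}\tfrac12\,\ket{x_a,x_b}_{\text{cl}}\otimes (U_A^{x_a}\otimes U_B^{x_b})\ket{\psi_0},
\]
where the parties' private choices are uniform and independent and are held classically in their own registers. The measurer's system alone is then the mixture $\rho_{AB}=\sum_{x_a,x_b}\tfrac14\,\rho_{x_a,x_b}$.

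\textbf{Step 2.} Express any POVM $\{\Pi_k\}$ followed by any stochastic map $P(j\mid k)$ as the effective POVM
\[
M_j=\sum_k P(j\mid k)\,\Pi_k .
\]
This gives $\Pr[j\mid x_a,x_b]=\operatorname{tr}(M_j\rho_{x_a,x_b})$, so every measurer deviation is captured by an arbitrary POVM $\{M_j\}$.

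\textbf{Step 3.} Use the structure of the primitive. The recorded value is $s=f(j,x_a,x_b)$, and for each $j$ the map $(x_a,x_b)\mapsto f(j,x_a,x_b)$ is balanced in the relevant sense. The key lemma, which I take from the security property of \cite{hartzell_quantum} that the measurer learns $s$ no better than at random, should be that the states $\rho^{(0)}=\tfrac12\sum_{s(x_a,x_b)=0}\rho_{x_a,x_b}$ and $\rho^{(1)}$ satisfy the relation below for each $j$ that is not an error outcome.
\[
\operatorname{tr}\!\bigl(M_j\,\mathcal{R}_j^{(0)}\bigr)=\operatorname{tr}\!\bigl(M_j\,\mathcal{R}_j^{(1)}\bigr)
\]
Here $\mathcal R_j^{(s)}$ denotes the sum of $\rho_{x_a,x_b}$ over the inputs whose recording rule on announcement $j$ yields $s$. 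I expect to establish this by exhibiting a local symmetry, for instance one party's input flip $x_b\mapsto x_b\oplus1$ implemented by a unitary $V$ on $\mathcal H_B$ alone. This symmetry maps $\mathcal R_j^{(0)}$ to $\mathcal R_j^{(1)}$ and commutes with the announcement, meaning that the honest relabeling of $j$ is absorbed into the party's recording rule rather than into $M_j$.

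\textbf{Step 4.} Summing over $j$ and conditioning on the no-error event then gives $\Pr[s=0\mid\text{no error}]=\tfrac12$, so $s$ is unbiased.

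\textbf{Main obstacle.} Step 3 is the hard part. A measurer with full POVM freedom can steer $j$, so it is not obvious that the parties' interpretation of $j$ cannot be correlated with $s$. In particular, announcing a fixed $j$ must not force a particular $s$. I would first try to show that the honest decoding $f(j,\cdot,\cdot)$ is, for each fixed $j$, a bijection between the inputs yielding $s=0$ and those yielding $s=1$. This bijection should be induced by the private uniform input of whichever party the measurer cannot see classically, which would make the conditional probability exactly $\tfrac12$ regardless of $M_j$. If that fails for some construction, the fallback is to argue that any $j$ breaking the symmetry is inconsistent with the honest correlations and is flagged as an error, and so is excluded by the conditioning on no error.
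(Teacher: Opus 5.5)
There is a genuine gap in Step 3, where the whole argument has to happen. First, the no-error event is not a property of $j$ alone. Under the recording rule of Figure~\ref{fig:coin_flipping}, an announcement is error-free for the joint choice $U$ exactly when $j=j^*(U)$. An announcement outside Bob's column is inconsistent with his unitary. The other entry of his column flips Bob's recorded value but not Alice's: for true $(U_a,U_b)$ with announced $\ket{\phi_{10}}$, Alice records $0$ and Bob records $1$. So the quantity to control is
\[
\Pr[s,\text{no error}]=\sum_j \operatorname{tr}\bigl(M_j D_j^{(s)}\bigr),\qquad D_j^{(s)}=\tfrac18\sum_{U:\,j^*(U)=j,\ s(U)=s}\rho_U ,
\]
with eight equiprobable joint unitaries, not the four input pairs of your Step 1. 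Your $\mathcal R_j^{(s)}$, defined through the recording rule on $j$ for ``non-error'' $j$, does not isolate this quantity. Second, the hiding property you want to import only gives $\sum_j D_j^{(0)}=\sum_j D_j^{(1)}$, and that does not control the sum. Take a hypothetical construction in which $j^*=1$ carries $\ket{0}$ for $s=0$ and $\ket{1}$ for $s=1$, and $j^*=2$ the reverse. The averaged states agree, yet measuring in the computational basis and announcing $1$ on $\ket{0}$ and $2$ on $\ket{1}$ is correct exactly when $s=0$.

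Third, the local-symmetry mechanism cannot give invariance ``regardless of $M_j$''. A unitary $V$ on $\mathcal H_B$ implementing one party's input flip changes the state the measurer receives: for a fixed Alice unitary, Bob's four unitaries yield four orthogonal $\ket{\phi_j}$. An adversarial POVM need not satisfy $\operatorname{tr}(M_jV\rho V^\dagger)=\operatorname{tr}(M_j\rho)$, nor relate $M_j$ to a relabeled $M_{\sigma(j)}$; the honest choice $M_j=\ket{\phi_j}\bra{\phi_j}$ already breaks this. What you need is a degeneracy, not a symmetry, and Figure~\ref{fig:coin_flipping} supplies it. Each $\ket{\phi_j}$ occupies exactly two cells with opposite recorded values; for example, $(U_a,U_b)$ and $(U_a'',U_b'')$ both give $\ket{\phi_{00}}$, with $s=0$ and $s=1$. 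Hence $D_j^{(0)}=D_j^{(1)}=\tfrac18\ket{\phi_j}\bra{\phi_j}$, and every POVM announces $j$ with equal probability in the two cases. Note that the paired cells differ in both parties' unitaries, not by a one-party flip. With this fact and ``wrong announcement $\Rightarrow$ error'', your Steps 2 and 4 close the proof. The paper's argument rests on the same two ingredients. It leaves the pairing implicit when it factors out $\Pr[j=j^*(U)]$ and counts $|\{U\in\mathcal U: s=0\}|=|\{U\in\mathcal U: s=1\}|$.
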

See \ref{sec:proof_coin_2} for a proof.
Therefore, if $s$ is biased, the measurer and at least one party must deviate from the primitive. Additionally, the following corollary holds.

\begin{corollary}
    The maximum bias in $s$ that any unilaterally deviating cheating party can guarantee is zero, without restricting the measurer's strategy. 
\end{corollary}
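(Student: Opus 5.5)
The corollary should follow from Claim \ref{ref:claim_coin1} together with a short argument about what a single cheating party can \emph{guarantee} when nothing is assumed about the measurer. I would define the guaranteed bias of a cheating party $P \in \{A, B\}$ with strategy $\sigma_P$ as a worst case over measurer behaviour:
\[
\beta(\sigma_P) \;=\; \min_{\sigma_M} \left| \Pr[s = b \mid \text{no error}; \sigma_P, \sigma_M] - \tfrac{1}{2} \right|.
\]
Here $b$ is $P$'s preferred value. The minimum ranges over all measurer deviations in the sense of the definition above, i.e.\ all POVMs $\{\Pi_k\}$ and all announcement maps $P(j \mid k)$. The opposing party is assumed to follow the primitive. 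The corollary then amounts to $\sup_{\sigma_P} \beta(\sigma_P) = 0$.

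The upper bound is immediate. The honest measurement and announcement rule is one admissible choice of $\sigma_M$: it is a legitimate POVM with the deterministic announcement map $P(j \mid k) = \delta_{jk}$. Fix any deviation $\sigma_P$, meaning an arbitrary pure state $\ket{\psi}$ on $\mathcal H_A \otimes \mathcal H_B \otimes \mathcal H_E$, an arbitrary local unitary, and an arbitrary recording rule. With the measurer honest and the other party honest, Claim \ref{ref:claim_coin1} says that $s$ is unbiased conditioned on no error. So the minimum defining $\beta(\sigma_P)$ is at most $0$, and since $\beta \ge 0$ we get $\beta(\sigma_P) = 0$. Taking the supremum over $\sigma_P$ gives zero.

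For completeness I would add a remark on the other unilateral case, using Claim \ref{ref:claim_coin2}. A deviating measurer facing two honest parties also cannot bias $s$. So any bias requires the measurer and at least one party to deviate \emph{jointly}. Once we have fixed that the measurer is not controlled by the cheater, which is the standing assumption that neither party accesses the other's post-unitary subsystem, such joint deviation is not something a unilateral cheater can force.

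The main obstacle is definitional rather than technical. The statement reads "guarantee \ldots\ without restricting the measurer's strategy", and this has to be read as worst case over measurer strategies, not best case. If it were best case, a measurer aligned with the cheater could obviously bias $s$, as the paper itself concedes. I would therefore state the quantifier order explicitly and point to the paper's assumption that the measurer is a separate system not controlled by either party. A second check is that the conditioning on "no error" is well defined: the honest measurer must produce a non-error outcome with positive probability under $\sigma_P$, and otherwise we take the bias to be vacuously zero. I would also verify that Claim \ref{ref:claim_coin1} covers arbitrary ancillas $\mathcal H_E$ and arbitrary recording rules, since the quantification over $\sigma_P$ relies on exactly that generality.
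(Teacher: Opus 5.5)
Your proposal is correct and follows the paper's own argument. The paper likewise notes that, whatever the cheating party does, the measurer always has an admissible strategy (following the primitive, or measuring in $\mathcal{J}$ and announcing uniformly at random) under which Claim~\ref{ref:claim_coin1} makes $s$ unbiased, so the bias a cheater can guarantee is zero; your $\sup_{\sigma_P}\min_{\sigma_M}$ formulation simply states explicitly the worst-case-over-measurer quantifier order that the paper leaves implicit.
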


 If Alice deviates from the primitive, either she chooses a unitary prescribed by the primitive or she deviates with some other local operation. In either case, by Claim \ref{ref:claim_coin1}, if the measurer follows the primitive, by measuring in $\mathcal{J}$ and announcing the measured outcome, $s$ is unbiased.

Any measurer can diverge from a prescribed deviation with at least one strategy that does not bias $s$, namely by following the primitive, or by measuring in $\mathcal{J}$ and making a measurement announcement that is uniform random over all $j \in \mathcal{J}$. Therefore, for any single party's deviation, the measurer can always choose a strategy that does not bias $s$, and correspondingly, the maximum bias in $s$ that any cheating party can ensure is zero, without restricting the measurer to \textit{not} choose such a strategy. 

When a cheating party's preferences over $s$ are known to the measurer, successfully biasing the coin is trivially fully revealing. The measurer, knowing the value of $s$, and the measurement announcement, additionally learns both parties' shared local information. Further, if the measurer performs a POVM that identifies multiple possible states with positive probability, where at least two states have different corresponding values $\Tilde{s}$, and whose measurement announcements $j$ differ -- any attempt to bias the coin is also trivially fully revealing. In these settings, generating favorable outcomes comes at the \textit{cost of security}.


\begin{figure}[ht]
    \centering
    \[
    \begin{array}{c|cccc}
       & U_b & U_b' & U_b'' & U_b''' \\ \hline
    U_a & (\ket{\phi_{00}} , {0}) & (\ket{\phi_{11}}, {1}) & (\ket{\phi_{10}}, {0}) & (\ket{\phi_{01}}, {1})\\
    U_a'' & (\ket{\phi_{10}}, {1}) & (\ket{\phi_{01}}, {0}) & (\ket{\phi_{00}}, {1}) & (\ket{\phi_{11}}, {0}) \\
    \end{array}
    \]
    \caption{Measurement outcome and recording strategy pairs, given corresponding joint unitaries, where rows correspond to Alice's unitaries and columns correspond to Bob's. Each measurement outcome denoted in an entry occurs with probability one under corresponding joint unitaries.}
    \label{fig:coin_flipping}
\end{figure}

\section{Discussion}
In summary, we have shown that a sufficient condition for fairness is that no deviating party can restrict the measurer's available measurement strategy. Under this condition, the measurer need not be trusted to follow the prescribed protocol and may adopt an arbitrary measurement strategy. Provided the measurer does not collude with either party, the primitive nevertheless generates an unbiased and private realization of the shared value $s$.

These results suggest that the primitive and the security properties introduced in \cite{hartzell_quantum} are useful not only for secure communication and distributed computation, but also for decentralized game-theoretic settings in which parties require a common random outcome without relying on a trusted mediator. Rather than assuming an honest referee, fairness follows from the physics of the generated quantum states along with the recording strategy. More broadly, this illustrates how quantum resources can be used to implement agreement protocols in which the outcome is jointly generated while remaining information-theoretically hidden from the physical system that mediates the interaction.

Several directions remain for future work. An important open question is to characterize the full class of functions whose bias is invariant to parties' deviations. We leave the more general question of characterizing the class of mechanisms implementable in such quantum environments without trusted mediators for future work. 

\newpage

\newpage

\printbibliography

\newpage

\section{Appendix}
\subsection{Proof of Claim \ref{ref:claim_coin1}}\label{sec:private_coin_flipping_proof}
\begin{proof}
Suppose that the measurer abides by the primitive and measures the received state $\ket{\phi} \in \mathcal{H}_A \otimes \mathcal{H}_B$ in $\mathcal{J}$, and announces the measured outcome $j \in \mathcal{J}$.  

    Consider a cheating Alice who wishes to bias the outcome of the value of $s \in \{0, 1\}$ so that either value is generated with probability $> \frac{1}{2}$. 
    
     For any arbitrary $\ket{\psi} \in \mathcal{H}_A \otimes \mathcal{H}_B$ that the parties can act on, there is no unitary that Alice can apply to her local subsystem that biases the value of $s$ when the resulting state is measured in $\mathcal{J}$ and the announcement corresponds with the measurement, assuming that Bob randomizes uniformly over his feasible set of unitaries $\{U_b, U_b', U_b'', U_b'''\}$.  

    This is due to the fact that, for any $\ket{\psi} \in \mathcal{H}_A \otimes \mathcal{H}_B$ and \textit{any} valid $U_A$, which may deviate from those specified by the primitive,

\[
\begin{aligned}
&\mathrm{Tr}\Big[(U_A \otimes U_b) \ket{\psi}\bra{\psi} (U_A^\dagger \otimes U_b^\dagger) \, \pi_{00}^{\mathcal{J}}\Big] 
= \mathrm{Tr}\Big[(U_A \otimes U_b') \ket{\psi}\bra{\psi} (U_A^\dagger \otimes U_b'^\dagger) \, \pi_{11}^{\mathcal{J}}\Big] \\
&= \mathrm{Tr}\Big[(U_A \otimes U_b'') \ket{\psi}\bra{\psi} (U_A^\dagger \otimes U_b''^\dagger) \, \pi_{10}^{\mathcal{J}}\Big] 
= \mathrm{Tr}\Big[(U_A \otimes U_b''') \ket{\psi}\bra{\psi} (U_A^\dagger \otimes U_b'''^\dagger) \, \pi_{01}^{\mathcal{J}}\Big],
\end{aligned}
\]

and

\[
\begin{aligned}
&\mathrm{Tr}\Big[(U_A \otimes U_b) \ket{\psi}\bra{\psi} (U_A^\dagger \otimes U_b^\dagger) \, \pi_{10}^{\mathcal{J}}\Big] 
= \mathrm{Tr}\Big[(U_A \otimes U_b') \ket{\psi}\bra{\psi} (U_A^\dagger \otimes U_b'^\dagger) \, \pi_{01}^{\mathcal{J}}\Big] \\
&= \mathrm{Tr}\Big[(U_A \otimes U_b'') \ket{\psi}\bra{\psi} (U_A^\dagger \otimes U_b''^\dagger) \, \pi_{00}^{\mathcal{J}}\Big] 
= \mathrm{Tr}\Big[(U_A \otimes U_b''') \ket{\psi}\bra{\psi} (U_A^\dagger \otimes U_b'''^\dagger) \, \pi_{11}^{\mathcal{J}}\Big].
\end{aligned}
\]

where $\pi_{ij}^{\mathcal{J}}$ denotes the operator corresponding with measurement outcome $\ket{\phi_{ij}} \in \mathcal{J}$. Meaning, given the recording strategy, for any arbitrary unitary $U_A$ that Alice chooses, for every unitary of Bob's under which he records a value of $s = 0$, there is another equally likely unitary of Bob's that results in him recording $s = 1$ and vice-versa. The same holds for any mixed-state preparations. If Alice's recorded value of $s$ does not correspond with Bob's, she generates an error and does not bias $s$.

In order to bias $s$ from a feasible deviation, Alice must therefore learn information about Bob's chosen unitary. 
Even if Alice prepares an initial state $\ket{\psi} \in \mathcal{H}_A \otimes \mathcal{H}_B \otimes \mathcal{H}_E$ entangled with an ancilla $\mathcal{H}_E$, since Bob's unitary $U_B$ acts only on $\mathcal{H}_B$, the reduced density matrix on Alice's systems,
\[
\rho_{AE} = \mathrm{Tr}_B\big[(I_A \otimes U_B \otimes I_E)\, \ket{\psi}\bra{\psi}\, (I_A \otimes U_B^\dagger \otimes I_E)\big],
\]
is independent of $U_B$.
Without access to Bob's post-unitary system, under her feasible deviations, Alice cannot bias $s$. 
 The analogous argument holds for Bob.  
\end{proof}

\subsection{Proof of Claim \ref{ref:claim_coin2}}\label{sec:proof_coin_2}
\begin{proof}
    This follows from the recording strategy given by the primitive in Figure \ref{fig:coin_flipping}.
    Suppose the parties follow the primitive by randomizing uniformly over their respective unitary sets given by the primitive, acting on any valid initial state $\ket{\psi} \in \mathcal{H}_A \otimes \mathcal{H}_B$, and recording values of $s$ according to the recording strategy prescribed by the primitive. 

    Let the parties' chosen joint unitary be denoted by $U$. The measurer can perform any deviation as previously defined. For any such feasible deviation, either she announces the correct state $j = j^*(U)$ or $j \neq j^*(U)$. 
    
    By the properties of the recording strategy, any incorrect announcement $j \neq j^*(U)$ results in an error. Therefore, the probability of any value of $s$ that is not generated in error, is the probability that $j = j^*(U)$, times the probability of the parties choosing a joint unitary $U$ that correspond with the value of $s$. Given that when the parties follow the primitive and sample $U$ uniformly from $\mathcal{U}$, and $|\{U \in \mathcal{U} : f^s(U)=1\}| = |\{U \in \mathcal{U} : f^s(U)=0\}|$, where $f^s$ denotes the restriction of $f(U)$ to its $s$ component, the resulting value $s$ is unbiased.

\end{proof}

\end{document}